      \newtheorem{theorem}{Theorem}
      \newtheorem{remark}{Remark}
      \newtheorem{lemma}{Lemma}
      \newtheorem{proof}{Proof}
\begin{document}
\title{Lower bound on Wyner's Common Information } 



      \author{
      \IEEEauthorblockN{Erixhen Sula \\}
      \IEEEauthorblockA{\'Ecole Polytechnique F\'ed\'erale \\ de Lausanne\\
      Lausanne, Switzerland\\
      erixhen.sula@epfl.ch}
      \and
      \IEEEauthorblockN{Michael Gastpar \\}
      \IEEEauthorblockA{\'Ecole Polytechnique F\'ed\'erale \\ de Lausanne\\
      Lausanne, Switzerland\\
      michael.gastpar@epfl.ch}}%


\maketitle


\begin{abstract}
An important notion of common information between two random variables is due to Wyner. In this paper, we derive a lower bound on Wyner's common information for continuous random variables. The new bound improves on the only other general lower bound on Wyner's common information, which is the mutual information.
We also show that the new lower bound is tight for the so-called ``Gaussian channels'' case, namely, when the joint distribution of the random variables can be written as the sum of a single underlying random variable and Gaussian noises.
We motivate this work from the recent variations of Wyner's common information and applications to network data compression problems such as the Gray-Wyner network. 
\end{abstract}

\section{Introduction}
Extracting and assessing common features amongst multiple variables is a natural task occurring in many different problem settings. Wyner's common information~\cite{Wyner} provides one answer to this, which was originally defined for finite alphabets as follows
\begin{align} \label{eqn:Wynerdef}
C(X;Y)= \inf_{W: X - W - Y } I(X,Y;W).
\end{align}
For a pair of random variables, it seeks to find the most compact third variable that makes the pair conditionally independent. Compactness is measured in terms of the mutual information between the pair and the third variable.
In~\cite{Wyner}, Wyner also identifies two operational interpretations.
The first concerns a source coding network often referred to as the Gray-Wyner network.
For this scenario, Wyner's common information characterizes the smallest common rate required to enable two decoders to recover $X$ and $Y,$ respectively, in a lossless fashion.
The second operational interpretation pertains to the distributed simulation of common randomness. Here, Wyner's common information characterizes the smallest number of random bits that need to be shared between the processors.
In subsequent work, Wyner's common information was extended to continuous random variables and was computed for a pair of Gaussian random variables \cite{Xu--Liu--Chen,Xu--Liu--Chen-2} and for a pair of additive ``Gaussian channel'' distributions \cite{Yang-Chen14}. Other related works include \cite{Veld--Gastpar, Lapidoth--Wigger}. Wyner's common information has many applications, including to communication networks~\cite{Wyner}, to caching~\cite[Section III.C]{Wang--Lim--Gastpar}, to source coding \cite{Satpathy--Cuff}, and to feature extraction~\cite{SulaG:21entropy}.

In this paper, we derive a new lower bound on Wyner's common information for continuous random variables. The proof is based on a method known as factorization of convex envelopes, which was originally introduced in \cite{Geng--Nair}.
The proof strategy is fundamentally different from the techniques that were used to solve Wyner's original common information problem. Specifically, for the latter, the generic approach is to first characterize the class of variables that enable conditional independence, and then inside this class to find the optimal variable.
By contrast, we lower bound the Wyner's common information problem by a convex problem, which we can then solve via optimizing.

We illustrate the promise of the new lower bound by considering Gaussian mixture distributions and Laplace distributions.
We also establish that the new lower bound is tight for a simple case of the so-called ``Gaussian channels'' distribution. Here, $X$ and $Y$ can be written as the sum of a single arbitrary random variable and jointly Gaussian noises.
We note that for this special case, Wyner's common information was previously found, using different methods, in~\cite{Yang-Chen14}.

We use the following notation. Random variables are denoted by uppercase letters $X,Y,Z$ and their realizations by lowercase letters $x,y,z$.
For the cross-covariance matrix of $X$ and $Y$, we use the shorthand notation $K_{X Y}$, and for the covariance matrix of a random vector $ X$ we use the shorthand notation $K_{ X}:= K_{ X X}$. Let $p_X(x)$ denote the probability density function of random variable $X$ at realisation $x$. Let $\mathcal{N}(m,\sigma^2)$ be the Gaussian probability density function with mean $m$ and variance $\sigma^2$.

\section{Main Result}
Here we present our lower bound on Wyner's common information. The bound is given in terms of the entropy of the pair, entropy and Wyner's common information for Gaussian random variables. The theorem says:
\begin{theorem} \label{thm:lowerWyner}
Let $(X,Y)$ have probability density function $p_{(X,Y)}$ that satisfy the covariance constraint $K_{(X,Y)}$. Let, $(X_g,Y_g) \sim \mathcal{N}(0,K_{(X,Y)})$, then
\begin{align} \label{eqn:mainineq}
C(X;Y) \geq  \max \{ C(X_g;Y_g) +h(X,Y)-h(X_g,Y_g), 0 \}.
\end{align}
where 
\begin{align}
C(X_g;Y_g) =\frac{1}{2} \log{\frac{1+|\rho|}{1-|\rho|}},
\end{align}
and $\rho$ is the correlation coefficient between $X$ and $Y$.
\end{theorem}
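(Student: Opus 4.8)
The plan is to start from the identity that holds under the conditional-independence constraint $X-W-Y$, namely $I(X,Y;W) = h(X,Y) - h(X,Y|W) = h(X,Y) - h(X|W) - h(Y|W)$, where the last step uses $X \perp Y \mid W$. Since $h(X,Y)$ does not depend on $W$, minimizing $I(X,Y;W)$ is the same as maximizing the sum of conditional entropies, so that
\begin{align}
C(X;Y) = h(X,Y) - \sup_{W:\, X-W-Y} \big[\, h(X|W) + h(Y|W)\,\big].
\end{align}
Applying the same identity to the Gaussian pair gives $\sup_{W:\,X_g-W-Y_g}[h(X_g|W)+h(Y_g|W)] = h(X_g,Y_g) - C(X_g;Y_g)$. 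Hence, writing $s(p_{XY}) := \sup_{W:\,X-W-Y}[h(X|W)+h(Y|W)]$, the claimed bound is equivalent to showing that among all $(X,Y)$ with the prescribed covariance $K_{(X,Y)}$ the functional $s$ is maximized by the Gaussian, since then $C(X;Y) = h(X,Y) - s(p_{XY}) \ge h(X,Y) - s(p_{X_gY_g}) = C(X_g;Y_g) + h(X,Y) - h(X_g,Y_g)$, with the $0$ branch coming from nonnegativity of $C$.

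Next I would recognize $s$ as an upper concave-envelope object, which is what brings the method of \cite{Geng--Nair} into play. The Markov constraint means that conditioning on $\{W=w\}$ decomposes $p_{XY}$ into a mixture of product distributions $p_{X|w}\,p_{Y|w}$, and $s(p_{XY})$ is exactly the supremum of $\mathbb{E}_W[h(X|W)+h(Y|W)]$ over all such mixture representations. Equivalently, $s$ is the upper concave envelope of the functional that equals $h(X)+h(Y)$ on product distributions and $-\infty$ elsewhere. This recasts the goal as a comparison of a concave envelope evaluated at the true law versus at the Gaussian law of the same covariance, which is precisely the setting in which the factorization-of-convex-envelope technique operates.

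To establish the Gaussian optimality, I would pass to a Lagrangian relaxation of the covariance constraint and invoke the doubling (tensorization-and-rotation) argument. Taking two i.i.d.\ copies of a candidate optimizer and applying the orthogonal sum/difference rotation $(U_+,U_-) = \tfrac{1}{\sqrt 2}(U^{(1)}+U^{(2)},\, U^{(1)}-U^{(2)})$ preserves the covariance constraint, while the envelope functional factorizes across the two rotated coordinates. The factorization forces any maximizer to be invariant under this rotation, and rotation invariance together with the prescribed covariance pins the maximizer down to be Gaussian; evaluating $s$ there yields $h(X_g,Y_g) - C(X_g;Y_g)$ with $C(X_g;Y_g)=\tfrac12\log\frac{1+|\rho|}{1-|\rho|}$.

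The main obstacle I anticipate is making the factorization step rigorous: one must verify that the product/Markov structure defining $s$ is compatible with the rotation, so that the rotated auxiliary still induces the required conditional independence, and that the concave envelope of the entropy sum genuinely splits across the rotated coordinates. Controlling attainment of the suprema, justifying the Lagrange duality for the matrix covariance constraint, and ruling out degenerate optimal $W$ are the delicate points; by comparison, the entropy bookkeeping that recovers the final inequality is routine.
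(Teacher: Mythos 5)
Your opening reduction is correct, but you should recognize that it is an \emph{equivalence}, not progress: writing $C(X;Y)=h(X,Y)-s(p_{(X,Y)})$ with $s(p_{(X,Y)})=\sup_{W:\,X-W-Y}\bigl[h(X|W)+h(Y|W)\bigr]$, the inequality of Theorem \ref{thm:lowerWyner} is exactly the statement that $s(p_{(X,Y)})\le s(p_{(X_g,Y_g)})$ over all densities with covariance $K_{(X,Y)}$. So after your first two paragraphs the entire content of the theorem still lies ahead, and that is precisely the step you do not prove: your third paragraph asserts that a doubling/rotation argument ``forces any maximizer to be invariant'' and ``pins the maximizer down to be Gaussian,'' but this is a placeholder for the actual mathematics, and no existing factorization theorem applies to your functional $s$. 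The results in the style of \cite{Geng--Nair}, and the Gaussian-extremality statement this paper actually invokes (Theorem \ref{Thm:Hypercontract}, i.e.\ Theorem~2 of \cite{Hyper_Gauss}), concern envelopes of the form $\inf_W\bigl[h(X|W)+h(Y|W)-(1+\lambda)h(X,Y|W)\bigr]$ in which $W$ is \emph{unconstrained}; your $s$ carries the Markov constraint $X-W-Y$ inside the optimization. Making the doubling work for that constrained envelope --- compatibility of the Markov structure with the factorization step, existence of optimizing $(p_{(X,Y)},W)$, and the strictness arguments that force Gaussianity --- is exactly what you defer to the last paragraph as ``delicate points,'' and it amounts to re-proving a Gaussian-optimality theorem that does not exist in the literature in that form (note the paper remarks that even for the unconstrained envelope, the existence argument was missing from \cite{Hyper_Gauss} and had to be supplied in \cite{SulaG:19it}).

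The paper's proof is engineered specifically to avoid this gap. It never touches the Markov-constrained envelope: instead it dualizes the constraint $I(X;Y|W)\le\gamma$ (with $\gamma=0$ for Theorem \ref{thm:lowerWyner}) with a multiplier $\mu$, obtaining by weak duality the lower bound $C(X;Y)\ge \mu\inf_W\bigl[h(X|W)+h(Y|W)-(1+\tfrac1\mu)h(X,Y|W)\bigr]+h(X,Y)$, where the infimum is now over \emph{all} $W$. At that point the known extremality theorem (Theorem \ref{Thm:Hypercontract}) applies verbatim, reducing the problem to a finite-dimensional optimization over covariance matrices $0\preceq K'\preceq K$, which is solved by an explicit KKT computation (Lemma \ref{lem:lemmacontractivity}), followed by a one-variable optimization over $\mu$. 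If you want to rescue your plan, the cleanest fix is to insert this Lagrangian relaxation step yourself: it converts your constrained concave envelope into the unconstrained convex envelope for which the factorization machinery is already established, and the rest becomes bookkeeping. As written, however, the proposal reformulates the theorem and then cites as available a Gaussian-extremality fact that is actually the theorem itself.
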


The proof is given in Section \ref{sec:proofmainthm}. A similar argument is used for the max-entropy bound where the probability density functions have covariance constraints. Interestingly, once we plug in Gaussian random variables and additive ``Gaussian channel'' distributions, then the bound is attained with equality. 

\begin{remark}
In \cite{Wyner}, it is showed that $C(X;Y) \geq I(X;Y)$. In Section \ref{Sec-mixture-exact}-\ref{sec:Laplace} we show that our lower bound from Theorem \ref{thm:lowerWyner} can be tighter.
\end{remark}

\begin{remark}
The bound of Theorem~\ref{thm:lowerWyner} can be expressed equivalently as
\begin{align} \label{eqn:derivedineq}
C(X;Y)\geq  C(X_g;Y_g) -D\left(p_{(X,Y)}\left\|p_{(X_g,Y_g)}\right)\right. .
\end{align}
\end{remark}

\begin{remark}
The bound of Theorem~\ref{thm:lowerWyner} can be negative (if not for the correction). If we choose $X$ and $Y$ to be independent, then $X_g$ and $Y_g$ will be independent as well. Thus, the bound in (\ref{eqn:derivedineq}) becomes 
\begin{align}
C(X;Y)\geq  -D\left(p_{X}\left\|p_{X_g}\right) \right. - D\left(p_{Y}\left\|p_{Y_g}\right) \right. ,
\end{align}
that is a negative bound from the positivity of the Kullback-Leibler divergence.  
\end{remark}

In the latter section, we provide pairs of random variable and compute our lower bounds on Wyner's common information to verify the usefulness of the derived bound.

%
%
%

\section{Additive ``Gaussian Channel'' Distributions}\label{Sec-mixture-exact}

In this section, we consider the distributions that are described as follows.
Let $(\hat{X}, \hat{Y})$ be a Gaussian distribution with mean zero and covariance matrix
\begin{align}
K_{(\hat{X}, \hat{Y})}=
\begin{pmatrix}
1 & \hat{\rho}\\
\hat{\rho} & 1
\end{pmatrix}.
\end{align}
Then, we consider the two-dimensional source given by
\begin{align} \label{eqn:mixturerv}
\begin{pmatrix}
X\\
Y
\end{pmatrix} &=
\begin{pmatrix}
\hat{X}\\
\hat{Y}
\end{pmatrix} +
\begin{pmatrix} 
A\\
B
\end{pmatrix}.
\end{align}
Let $(A,B)$ be arbitrary random variables with mean zero and covariance 
\begin{align} \label{eqn:covAB}
K_{(A,B)}=
\begin{pmatrix}
\sigma_A^2 & r \sigma_A \sigma_B\\
r \sigma_A \sigma_B & \sigma_B^2
\end{pmatrix},
\end{align}
where $\sigma_A=\sigma_B$ and $(A,B)$ is independent of the pair $(\hat{X},\hat{Y})$. For this particular distribution, we evaluate our lower bound in (\ref{thm:lowerWyner}) and also provide an upper bound.

\subsection{Lower Bound}
We have that ${\mathbb E}[X]={\mathbb E}[Y]=0$ and 
\begin{align} \label{eqn:K_XY_comp}
{\mathbb E}[X^2] &= {\mathbb E}[\hat{X}^2]+ {\mathbb E}[A^2] =1+\sigma_A^2, \\
{\mathbb E}[XY] &= {\mathbb E}[\hat{X} \hat{Y}]+{\mathbb E}[AB] =\hat{\rho}+r \sigma_A^2.
\end{align}
By symmetry ${\mathbb E}[Y^2]={\mathbb E}[X^2]$ and 
\begin{align} \label{eqn:rho_XY_comp}
\rho=\frac{{\mathbb E}[XY] }{\sqrt{{\mathbb E}[X^2]{\mathbb E}[Y^2]}} =\frac{\hat{\rho}+r \sigma_A^2}{1+\sigma_A^2}.
\end{align}
Therefore, the formula given in Theorem~\ref{thm:lowerWyner} evaluates to
\begin{align}
C(X;Y)  &\geq C(X_g;Y_g) +h(X,Y)-h(X_g,Y_g) \\
&= \frac{1}{2}\log{\frac{1+\rho}{1-\rho}} + h(X,Y) \nonumber \\
& \hspace{1.2em} -\frac{1}{2}\log{(2 \pi e)^2 \left( (1+\sigma_A^2)^2 -(\hat{\rho}+r \sigma_A^2)^2 \right)} \label{eqn:expl_lowercomp}\\
&= h(X,Y) -\log{ \left( 2 \pi e (1-\hat{\rho}+(1-r)\sigma_A^2 \right)}. \label{eqn:LBAG}
\end{align}
where (\ref{eqn:expl_lowercomp}) follows from substituting for $K_{(X,Y)}$ and (\ref{eqn:LBAG}) follows from substituting for $\rho$ computed in (\ref{eqn:rho_XY_comp}).

\subsection{Upper Bound} \label{sec:UpperBound}
Next we give an upper bound on Wyner's common information for the example of this section.
To accomplish this, rewrite the pair $(\hat{X},\hat{Y})$ as
\begin{align} \label{eqn:gausssplit}
\hat{X} &= \sqrt{\hat{\rho}} V + Z_x, \nonumber \\
\hat{Y} &= \sqrt{\hat{\rho}} V + Z_y,
\end{align}
where $V,Z_x,Z_y$ are mutually independent, $V \sim \mathcal{N}(0,1)$ and $Z_x,Z_y \sim \mathcal{N}(0,1-\hat{\rho})$.
Then, a valid choice to make $X$ and $Y$ conditionally independent on $W$ is $W =(\sqrt{\hat{\rho}} V + A,\sqrt{\hat{\rho}} V + B)$.
By combining (\ref{eqn:mixturerv}) and (\ref{eqn:gausssplit}) we can rewrite the pair $(X,Y)$ as
\begin{align}
X=&\sqrt{\hat{\rho}} V + A+Z_x, \nonumber \\
Y=&\sqrt{\hat{\rho}} V + B+Z_y,
\end{align}
where $W$ is independent of $Z_x$ and $Z_y$.
So we have
\begin{align}
&I(X;Y|W) \nonumber \\
&=I(\sqrt{\hat{\rho}} V+A + Z_x;\sqrt{\hat{\rho}} V+B + Z_y| W) \\
&=I(Z_x;Z_y|W) \label{eqn:exp1lb}\\
&=I(Z_x;Z_y) \label{eqn:exp2lb}\\
&=0,\label{eqn:exp3lb}
\end{align}
where (\ref{eqn:exp1lb}) follows by subtracting the parts that are in the conditioning by recalling that $W=(\sqrt{\hat{\rho}} V+A,\sqrt{\hat{\rho}} V+B)$, (\ref{eqn:exp2lb}) follows from independence of $W$ and $(Z_x,Z_y)$ and (\ref{eqn:exp3lb}) follows from the independence of $Z_x$ and $Z_y$. 

Thus, the upper bound is
\begin{align}
&C(X;Y) \nonumber \\
&\le I(X,Y; W) \label{eqn:upexp1}\\
&= h(X,Y) - h(\sqrt{\hat{\rho}} V+A+Z_x,\sqrt{\hat{\rho}} V+B+Z_y|W) \label{eqn:upexp2}\\
&= h(X,Y) - h(Z_x,Z_y|W) \label{eqn:upexp3}\\
&= h(X,Y) - h(Z_x,Z_y) \label{eqn:upexp4}\\
&= h(X,Y) -\log{ \left( 2 \pi e (1-\hat{\rho}) \right) }. \label{eqn:UBAG}
\end{align}
where (\ref{eqn:upexp1}) follows from the definition of $C(X;Y)$ where $W$ satisfies $X-W-Y$, (\ref{eqn:upexp2}) follows by rewriting the mutual information, (\ref{eqn:upexp3}) follows from subtracting the parts that are in the conditioning and (\ref{eqn:upexp4}) follows from independence of $W$ and $(Z_x,Z_y)$.

\subsection{Example 1}
\begin{lemma} \label{lem:Gaussaddchannel}
For the additive ``Gaussian channel'' distributions described in (\ref{eqn:mixturerv}) and $A=B$, we have
\begin{align}
C(X;Y) &= h(X,Y) -\log{ \left( 2 \pi e (1-\hat{\rho}) \right)}.
\end{align}
\end{lemma}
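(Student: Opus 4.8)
The plan is to obtain the claim by sandwiching $C(X;Y)$ between the lower bound and the upper bound already derived in this section, and observing that for $A=B$ the two coincide. The first step is to translate the hypothesis $A=B$ into the parameters of the covariance matrix (\ref{eqn:covAB}). Since $B=A$, we have $\mathrm{Cov}(A,B)=\mathrm{Var}(A)=\sigma_A^2$, and together with the standing assumption $\sigma_A=\sigma_B$ this forces the correlation coefficient to be $r=1$.

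Next I would substitute $r=1$ into the evaluated lower bound (\ref{eqn:LBAG}). The correction term $(1-r)\sigma_A^2$ then vanishes, so the lower bound of Theorem~\ref{thm:lowerWyner} reads
\begin{align}
C(X;Y)\ge \max\left\{ h(X,Y)-\log\left(2\pi e(1-\hat\rho)\right),\,0\right\}.
\end{align}
I would then invoke the upper bound (\ref{eqn:UBAG}), which was derived for an arbitrary admissible pair $(A,B)$ and hence applies verbatim when $A=B$, giving exactly
\begin{align}
C(X;Y)\le h(X,Y)-\log\left(2\pi e(1-\hat\rho)\right).
\end{align}

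To finish, I would combine the two displays. Because Wyner's common information is nonnegative, the upper bound already forces $h(X,Y)-\log\left(2\pi e(1-\hat\rho)\right)\ge 0$; consequently the $\max\{\cdot,0\}$ in the lower bound is inactive and both bounds equal the same quantity, which yields the asserted equality. The only point that genuinely needs checking is that the upper bound construction stays valid in this degenerate regime: when $A=B$ the auxiliary $W=(\sqrt{\hat\rho}V+A,\sqrt{\hat\rho}V+B)$ has two identical coordinates and is effectively scalar. Since the derivation of (\ref{eqn:UBAG}) never relied on the two coordinates being distinct, the conditional-independence step $I(Z_x;Z_y\mid W)=0$ and the entropy computation carry over unchanged. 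This collapse of $W$ to a scalar is the mild subtlety I would flag, but it is not a real obstacle, so the argument is essentially a one-line consequence of the bounds already in hand.
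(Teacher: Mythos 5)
Your proposal is correct and follows essentially the same route as the paper: the paper also proves the lemma by observing that $A=B$ forces $r=1$, whereupon the lower bound (\ref{eqn:LBAG}) and the upper bound (\ref{eqn:UBAG}) coincide. Your additional remarks — that nonnegativity of $C(X;Y)$ makes the $\max\{\cdot,0\}$ inactive, and that the construction of $W$ remains valid when its two coordinates are identical — are sound details that the paper leaves implicit.
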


The proof follows from the fact that the lower bound (\ref{eqn:LBAG}) and upper bound (\ref{eqn:UBAG}) coincide when $A=B$, which means $r=1$. The same result is derived by a different approach in \cite{Yang-Chen14}.
To illustrate Lemma \ref{lem:Gaussaddchannel}, let $A$ be binary $\pm \sigma_A$ with uniform probability. Then, we get Figure \ref{fig:Gaussaddchannel1}.

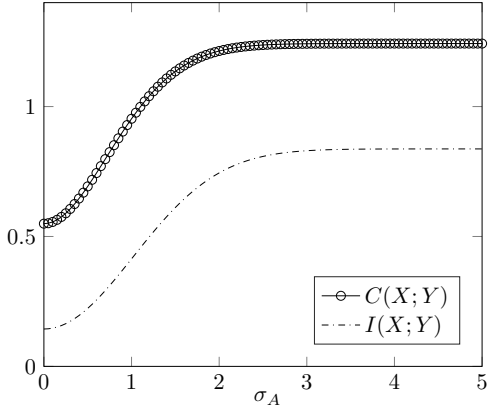
\begin{figure}[h!]
\centering
\scalebox{0.85}{
%
%
\begin{tikzpicture}

\draw[black,thick] (3.5,-0.5) node{$\sigma_A$};

\begin{axis}[%
xmin=0,
xmax=5,
ymin=0,
ymax=1.4,
axis background/.style={fill=white},
legend style={at={(0.95,0.2477)}, anchor=north east, legend cell align=left, align=left, draw=white!15!black}
]
\addplot [color=black, mark=o, mark options={solid, black}]
  table[row sep=crcr]{%
0	0.549306144334046\\
0.05	0.550970039370216\\
0.1	0.555928755209485\\
0.15	0.56408548528528\\
0.2	0.57528546087927\\
0.25	0.589324518997749\\
0.3	0.605959073762691\\
0.35	0.624916373711938\\
0.4	0.645904266119257\\
0.45	0.668620021725376\\
0.5	0.692758022493896\\
0.55	0.718016273194129\\
0.6	0.744101785085834\\
0.65	0.770734921165044\\
0.7	0.797652806726262\\
0.75	0.824611909307964\\
0.8	0.851389886130541\\
0.85	0.877786789092634\\
0.9	0.903625709328948\\
0.95	0.928752936081835\\
1	0.953037698413919\\
1.05	0.976371552985524\\
1.1	0.998667476536498\\
1.15	1.01985871760415\\
1.2	1.03989745818149\\
1.25	1.05875333230683\\
1.3	1.07641184486707\\
1.35	1.0928727301267\\
1.4	1.10814828563509\\
1.45	1.12226171321834\\
1.5	1.13524549476008\\
1.55	1.1471398264535\\
1.6	1.15799113122848\\
1.65	1.16785066516948\\
1.7	1.17677323000859\\
1.75	1.18481600025151\\
1.8	1.19203747022424\\
1.85	1.19849652335122\\
1.9	1.20425162332457\\
1.95	1.20936012451681\\
2	1.21387769703717\\
2.05	1.21785786023569\\
2.1	1.22135161720896\\
2.15	1.22440718194454\\
2.2	1.22706979013217\\
2.25	1.22938158434094\\
2.3	1.23138156418845\\
2.35	1.23310559226294\\
2.4	1.23458644688636\\
2.45	1.23585391327169\\
2.5	1.23693490521431\\
2.55	1.23785361012024\\
2.6	1.23863165089505\\
2.65	1.2392882589604\\
2.7	1.23984045341766\\
2.75	1.24030322211039\\
2.8	1.2406897010439\\
2.85	1.24101134927945\\
2.9	1.24127811703231\\
2.95	1.24149860525122\\
3	1.24168021545041\\
3.05	1.24182928898871\\
3.1	1.24195123536\\
3.15	1.24205064936129\\
3.2	1.24213141725946\\
3.25	1.24219681227191\\
3.3	1.24224957983378\\
3.35	1.24229201323057\\
3.4	1.24232602025605\\
3.45	1.2423531815948\\
3.5	1.24237480165434\\
3.55	1.24239195256561\\
3.6	1.24240551205883\\
3.65	1.24241619588733\\
3.7	1.24242458543896\\
3.75	1.24243115112461\\
3.8	1.24243627209184\\
3.85	1.24244025275515\\
3.9	1.24244333659223\\
3.95	1.24244571759974\\
4	1.24244754976393\\
4.05	1.24244895485022\\
4.1	1.24245002878393\\
4.15	1.24245084684999\\
4.2	1.24245146791433\\
4.25	1.24245193783277\\
4.3	1.24245229219469\\
4.35	1.24245255851875\\
4.4	1.24245275800532\\
4.45	1.24245290692654\\
4.5	1.24245301772681\\
4.55	1.24245309988771\\
4.6	1.24245316060779\\
4.65	1.2424532053317\\
4.7	1.24245323816332\\
4.75	1.24245326218389\\
4.8	1.24245327969943\\
4.85	1.24245329242859\\
4.9	1.24245330164856\\
4.95	1.2424533083042\\
5	1.24245331309285\\
};
\addlegendentry{$C(X;Y)$}

\addplot [color=black, dashdotted]
  table[row sep=crcr]{%
0	0.143841036225903\\
0.05	0.144674021381876\\
0.1	0.147168754601365\\
0.15	0.151312265057855\\
0.2	0.157082066070372\\
0.25	0.164445262603877\\
0.3	0.173357869949305\\
0.35	0.183764546040646\\
0.4	0.195598788017088\\
0.45	0.208783540972767\\
0.5	0.223232122435526\\
0.55	0.238849361863648\\
0.6	0.255532869686199\\
0.65	0.273174371100706\\
0.7	0.291661058823525\\
0.75	0.310876933691456\\
0.8	0.330704112229624\\
0.85	0.351024086818912\\
0.9	0.371718927919748\\
0.95	0.392672419874278\\
1	0.413771122839694\\
1.05	0.434905353937674\\
1.1	0.455970081084343\\
1.15	0.476865723422302\\
1.2	0.497498852922667\\
1.25	0.517782792615401\\
1.3	0.537638108040253\\
1.35	0.556992989856391\\
1.4	0.575783527056572\\
1.45	0.593953871840461\\
1.5	0.611456298842611\\
1.55	0.628251163026549\\
1.6	0.644306762077449\\
1.65	0.659599110510678\\
1.7	0.674111633908907\\
1.75	0.687834792680401\\
1.8	0.700765645468393\\
1.85	0.712907362825753\\
1.9	0.724268701997569\\
1.95	0.734863453633441\\
2	0.744709870996512\\
2.05	0.75383009176808\\
2.1	0.762249561893847\\
2.15	0.769996470107904\\
2.2	0.777101200839716\\
2.25	0.783595812191154\\
2.3	0.789513544593438\\
2.35	0.794888364661661\\
2.4	0.799754547671088\\
2.45	0.804146301028045\\
2.5	0.8080974301076\\
2.55	0.811641046910479\\
2.6	0.814809321158374\\
2.65	0.81763327271871\\
2.7	0.820142603625635\\
2.75	0.822365567453513\\
2.8	0.824328873395043\\
2.85	0.826057622100926\\
2.9	0.827575270138796\\
2.95	0.828903619826474\\
3	0.830062831167691\\
3.05	0.831071452670297\\
3.1	0.831946467931779\\
3.15	0.832703355039127\\
3.2	0.833356156022692\\
3.25	0.833917553833405\\
3.3	0.834398954552044\\
3.35	0.834810572798623\\
3.4	0.835161518561647\\
3.45	0.835459883925899\\
3.5	0.835712828416246\\
3.55	0.835926661911237\\
3.6	0.836106924289722\\
3.65	0.836258461174803\\
3.7	0.836385495309907\\
3.75	0.836491693262742\\
3.8	0.836580227281026\\
3.85	0.836653832245214\\
3.9	0.83671485775151\\
3.95	0.836765315443171\\
4	0.836806921761856\\
4.05	0.836841136344385\\
4.1	0.83686919631623\\
4.15	0.836892146762525\\
4.2	0.836910867662228\\
4.25	0.83692609758271\\
4.3	0.836938454422602\\
4.35	0.83694845349022\\
4.4	0.836956523186538\\
4.45	0.836963018553587\\
4.5	0.836968232926541\\
4.55	0.836972407916035\\
4.6	0.836975741922332\\
4.65	0.836978397371571\\
4.7	0.836980506838576\\
4.75	0.836982178211197\\
4.8	0.836983499026104\\
4.85	0.836984540099378\\
4.9	0.836985358551647\\
4.95	0.836986000323343\\
5	0.836986502255064\\
};
\addlegendentry{$I(X;Y)$}

\end{axis}
\end{tikzpicture}
\vspace{-1em}
\caption{The o-line is the exact Wyner's common information $C(X;Y)$ for the specified Gaussian mixture distribution. The dashed line is the mutual information $I(X;Y)$. In this setup we plot $C(X;Y)$ and $I(X;Y)$ in nats versus $\sigma_A$ for $\hat{\rho}=0.5$.} \label{fig:Gaussaddchannel1}
\end{figure}
\subsection{Example 2}
Another example is to choose $(A,B)$ be doubly symmetric binary distribution where $p_{(A,B)}(A=B=\sigma_A)=p_{(A,B)}(A=B=-\sigma_A)=\frac{1+r}{4}$ and $p_{(A,B)}(A=-B=\sigma_A)=p_{(A,B)}(A=-B=-\sigma_A)=\frac{1-r}{4}$. 
Note that for these choices, the covariance matrix of $A$ and $B$ is given by Equation (\ref{eqn:covAB}). If we select $A=B$ or $r=1,$ this model is precisely the model studied in Example 1. A numerical evaluation is shown in Figure \ref{fig:Gaussaddchannel2}. 

\begin{figure}[h!]
\centering
\scalebox{0.85}{
%
%
\begin{tikzpicture}

\draw[black,thick] (3.5,-0.5) node{$\sigma_A$};
\begin{axis}[%
xmin=0,
xmax=3,
ymin=-0.5,
ymax=1.7,
axis background/.style={fill=white},
legend style={at={(0.85,0.2777)}, anchor=north east,legend cell align=left, align=left, draw=white!15!black}
]
\addplot [color=black, mark=asterisk, mark options={solid, black}]
  table[row sep=crcr]{%
0	0.549306144334046\\
0.05	0.550304148649307\\
0.1	0.553274417125354\\
0.15	0.558147190176099\\
0.2	0.564810805083345\\
0.25	0.573117652142203\\
0.3	0.582891002559507\\
0.35	0.593931848477728\\
0.4	0.606025171329101\\
0.45	0.618945360279453\\
0.5	0.632460750746126\\
0.55	0.646337407351311\\
0.6	0.660342337997021\\
0.65	0.67424631992845\\
0.7	0.68782647500311\\
0.75	0.70086867557933\\
0.8	0.713169811048747\\
0.85	0.724539906000075\\
0.9	0.734804056261102\\
0.95	0.743804137286509\\
1	0.751400237871626\\
1.05	0.757471778064229\\
1.1	0.761918280777891\\
1.15	0.764659779787079\\
1.2	0.765636860769694\\
1.25	0.764810345557754\\
1.3	0.762160641834492\\
1.35	0.75768679058199\\
1.4	0.751405251312427\\
1.45	0.743348470416801\\
1.5	0.733563280905141\\
1.55	0.722109182597862\\
1.6	0.709056550731589\\
1.65	0.694484818294259\\
1.7	0.678480673533942\\
1.75	0.661136309328973\\
1.8	0.642547755767619\\
1.85	0.622813321642976\\
1.9	0.60203216486319\\
1.95	0.580303006208401\\
2	0.557722995602786\\
2.05	0.534386735236746\\
2.1	0.510385459561907\\
2.15	0.485806368456904\\
2.2	0.460732106748072\\
2.25	0.435240380779208\\
2.3	0.409403700840309\\
2.35	0.383289236952725\\
2.4	0.356958774724165\\
2.45	0.330468757668621\\
2.5	0.303870402477811\\
2.55	0.277209874158018\\
2.6	0.250528508648028\\
2.65	0.223863071441745\\
2.7	0.197246041794584\\
2.75	0.170705913239323\\
2.8	0.144267502326064\\
2.85	0.11795225869191\\
2.9	0.0917785707217711\\
2.95	0.0657620621602388\\
3	0.0399158760482832\\
3.05	0.0142509432812097\\
3.1	-0.011223766097594\\
3.15	-0.0365010100422003\\
3.2	-0.0615750606569248\\
3.25	-0.0864415094230373\\
3.3	-0.111097091158437\\
3.35	-0.135539526051362\\
3.4	-0.159767378871445\\
3.45	-0.183779934291035\\
3.5	-0.207577087141946\\
3.55	-0.231159246373778\\
3.6	-0.25452725146336\\
3.65	-0.277682300039897\\
3.7	-0.300625885533305\\
3.75	-0.323359743711557\\
3.8	-0.345885807048049\\
3.85	-0.368206165939671\\
3.9	-0.390323035883101\\
3.95	-0.412238729802601\\
4	-0.43395563480861\\
4.05	-0.455476192746554\\
4.1	-0.47680288397435\\
4.15	-0.497938213876063\\
4.2	-0.518884701687131\\
4.25	-0.539644871262464\\
4.3	-0.560221243474855\\
4.35	-0.580616329974374\\
4.4	-0.600832628083793\\
4.45	-0.620872616637202\\
4.5	-0.640738752602963\\
4.55	-0.660433468355752\\
4.6	-0.679959169487219\\
4.65	-0.699318233061545\\
4.7	-0.7185130062403\\
4.75	-0.737545805212352\\
4.8	-0.756418914377441\\
4.85	-0.775134585739734\\
4.9	-0.793695038476802\\
4.95	-0.812102458654116\\
5	-0.830358999061937\\
};
\addlegendentry{Lower bound from Theorem \ref{thm:lowerWyner}}

\addplot [color=black, dashdotted]
  table[row sep=crcr]{%
0	0.143841036225903\\
0.05	0.144340411769701\\
0.1	0.145831071415959\\
0.15	0.148290818795564\\
0.2	0.151683372710006\\
0.25	0.155959516869052\\
0.3	0.161058797315198\\
0.35	0.166911741860697\\
0.4	0.173442491887791\\
0.45	0.180571663761619\\
0.5	0.188219214378971\\
0.55	0.196307083616444\\
0.6	0.204761423817406\\
0.65	0.213514289847368\\
0.7	0.222504735018237\\
0.75	0.2316793234234\\
0.8	0.240992119265411\\
0.85	0.25040424595442\\
0.9	0.259883123472778\\
0.95	0.269401495000275\\
1	0.278936346760774\\
1.05	0.288467811841584\\
1.1	0.297978132097761\\
1.15	0.307450734292448\\
1.2	0.316869458898775\\
1.25	0.326217963582776\\
1.3	0.335479309012942\\
1.35	0.344635722699949\\
1.4	0.353668527221413\\
1.45	0.362558212391055\\
1.5	0.371284626525159\\
1.55	0.379827259667979\\
1.6	0.388165591149122\\
1.65	0.396279474811475\\
1.7	0.4041495373368\\
1.75	0.411757567984734\\
1.8	0.419086881467694\\
1.85	0.426122639357344\\
1.9	0.432852119149602\\
1.95	0.43926492373607\\
2	0.44535312740346\\
2.05	0.4511113575132\\
2.1	0.456536813635471\\
2.15	0.461629228079773\\
2.2	0.466390773471544\\
2.25	0.470825924270084\\
2.3	0.474941279933078\\
2.35	0.478745357844159\\
2.4	0.482248364174553\\
2.45	0.485461950606054\\
2.5	0.488398964348727\\
2.55	0.491073198205968\\
2.6	0.493499146621464\\
2.65	0.4956917727409\\
2.7	0.497666290580447\\
2.75	0.499437965454052\\
2.8	0.501021934906753\\
2.85	0.502433051555795\\
2.9	0.503685748478506\\
2.95	0.504793927113447\\
3	0.505770867074768\\
3.05	0.506629156813649\\
3.1	0.507380643698526\\
3.15	0.50803640182243\\
3.2	0.508606715669214\\
3.25	0.509101077677474\\
3.3	0.509528197715331\\
3.35	0.509896022512806\\
3.4	0.510211763180783\\
3.45	0.510481929061491\\
3.5	0.510712366300862\\
3.55	0.510908299694024\\
3.6	0.511074376526794\\
3.65	0.511214711308862\\
3.7	0.511332930466785\\
3.75	0.51143221622662\\
3.8	0.51151534907116\\
3.85	0.511584748294795\\
3.9	0.511642510306249\\
3.95	0.511690444438829\\
4	0.511730106124272\\
4.05	0.511762827365026\\
4.1	0.511789744508239\\
4.15	0.511811823375624\\
4.2	0.511829881847428\\
4.25	0.511844610026512\\
4.3	0.511856588133766\\
4.35	0.511866302296364\\
4.4	0.511874158401415\\
4.45	0.511880494185686\\
4.5	0.511885589733513\\
4.55	0.511889676546238\\
4.6	0.511892945341258\\
4.65	0.511895552726569\\
4.7	0.511897626888249\\
4.75	0.51189927241481\\
4.8	0.511900574372735\\
4.85	0.511901601734368\\
4.9	0.511902410250229\\
4.95	0.51190304484532\\
5	0.511903541611482\\
};
\addlegendentry{$I(X;Y)$}

\addplot [color=black, mark=diamond, mark options={solid, black}]
  table[row sep=crcr]{%
0	0.549306144334046\\
0.05	0.55130364898239\\
0.1	0.557266438394892\\
0.15	0.567106931547571\\
0.2	0.580684154239635\\
0.25	0.597810264732574\\
0.3	0.618258146396798\\
0.35	0.641769177891888\\
0.4	0.668060562248554\\
0.45	0.696831898936525\\
0.5	0.727770930550451\\
0.55	0.760558551441333\\
0.6	0.794873230954627\\
0.65	0.830395002418381\\
0.7	0.86680913053155\\
0.75	0.90380951957602\\
0.8	0.941101879094754\\
0.85	0.978406629957125\\
0.9	1.01546151377592\\
0.95	1.05202386095584\\
1	1.08787247449284\\
1.05	1.12280909508161\\
1.1	1.15665942552308\\
1.15	1.189273706734\\
1.2	1.22052685220538\\
1.25	1.25031816133945\\
1.3	1.27857064389438\\
1.35	1.30522999728314\\
1.4	1.33026328547024\\
1.45	1.35365737266774\\
1.5	1.37541716707754\\
1.55	1.39556372981207\\
1.6	1.41413230215681\\
1.65	1.43117030086869\\
1.7	1.44673532658069\\
1.75	1.46089322494718\\
1.8	1.47371623422494\\
1.85	1.48528124681962\\
1.9	1.49566820617254\\
1.95	1.50495865441418\\
2	1.51323444063022\\
2.05	1.52057659449057\\
2.1	1.52706436546733\\
2.15	1.53277442397267\\
2.2	1.53778021750034\\
2.25	1.54215147226201\\
2.3	1.54595382884881\\
2.35	1.54924859908044\\
2.4	1.5520926303829\\
2.45	1.55453826369368\\
2.5	1.55663337097318\\
2.55	1.55842145882476\\
2.6	1.55994182543196\\
2.65	1.56122975893821\\
2.7	1.56231676646285\\
2.75	1.56323082410985\\
2.8	1.56399663953219\\
2.85	1.56463591982458\\
2.9	1.5651676386926\\
2.95	1.56560829796424\\
3	1.56597217954333\\
3.05	1.56627158484536\\
3.1	1.56651705959325\\
3.15	1.56671760257798\\
3.2	1.56688085761294\\
3.25	1.56701328842784\\
3.3	1.56712033667049\\
3.35	1.56720656351639\\
3.4	1.56727577563692\\
3.45	1.56733113645921\\
3.5	1.56737526376973\\
3.55	1.56741031478283\\
3.6	1.56743805982176\\
3.65	1.56745994575327\\
3.7	1.56747715028208\\
3.75	1.56749062816073\\
3.8	1.5675011503017\\
3.85	1.56750933670557\\
3.9	1.56751568403749\\
3.95	1.56752058860408\\
4	1.56752436540151\\
4.05	1.56752726382958\\
4.1	1.56752948059192\\
4.15	1.56753117023689\\
4.2	1.56753245372913\\
4.25	1.56753342538897\\
4.3	1.56753415848353\\
4.35	1.56753470971261\\
4.4	1.56753512279045\\
4.45	1.56753543129444\\
4.5	1.56753566091984\\
4.55	1.5675358312575\\
4.6	1.56753595718887\\
4.65	1.5675360499767\\
4.7	1.56753611811365\\
4.75	1.56753616798028\\
4.8	1.5675362043528\\
4.85	1.5675362307936\\
4.9	1.56753624994998\\
4.95	1.56753626378222\\
5	1.56753627373643\\
};
\addlegendentry{Upper Bound from Section \ref{sec:UpperBound}}

\end{axis}
\end{tikzpicture}
\vspace{-1em}
\caption{The $*$-line is the lower bound on $C(X;Y)$ from Theorem \ref{thm:lowerWyner} and the $\diamond$-line is the upper bound on $C(X;Y)$ from Section \ref{sec:UpperBound}. The dashed line is the mutual information $I(X;Y)$. In this setup we plot the bounds on $C(X;Y)$ in nats versus $\sigma_A$ for $\hat{\rho}=0.5$ and $r=0.9$.} \label{fig:Gaussaddchannel2}
\end{figure}
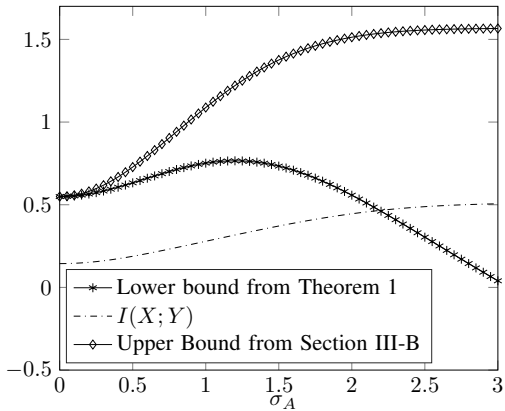

\section{Laplace Distributions} \label{sec:Laplace}
In this section, we consider the case when $(X,Y)$ is distributed according to the bivariate Laplace distribution described \cite[Section~5.1.3]{Kotz01} by
\begin{align}
p_{(X,Y)}(x,y)=\frac{1}{\pi \sqrt{1-\rho_{\ell}^2}} K_0 \left( \sqrt{\frac{2(x^2-2 \rho_{\ell}xy+y^2)}{1-\rho_{\ell}^2}} \right),
\end{align}
where $K_0$ is the modified Bessel function of the second kind described by
\begin{align}
K_0(z)=\frac{1}{2}\int_{-\infty}^{\infty} \frac{e^{i z t}}{\sqrt{t^2+1}} dt.
\end{align}
The variances of $X$ and $Y$ are unity and the correlation coefficient is $\rho_{\ell}$.
Define the entropy power of $(X,Y)$ as
\begin{eqnarray}
N(X,Y) &= \frac{1}{2\pi e} \exp(h(X,Y)).
\end{eqnarray}
Then, the bound of Theorem~\ref{thm:lowerWyner} can be expressed as
\begin{align}
 C(X;Y) &\ge \log \frac{N(X,Y)}{1-\rho_{\ell}}.
 \end{align}
Computation of the joint entropy $h(X,Y)$ as well as the mutual information $I(X;Y)$
leads to the curves in Figure~\ref{fig:Laplace}, further illustrating the potential of the new bound.


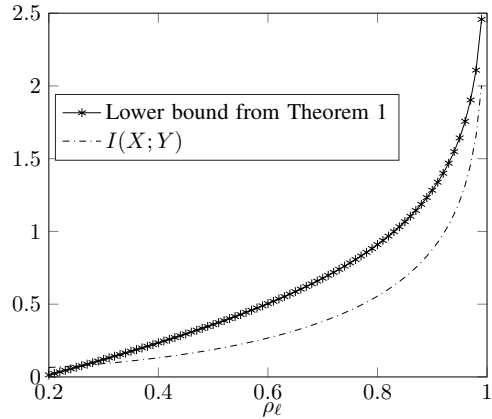
\begin{figure}[h!]
\centering
\scalebox{0.85}{
%
%
\begin{tikzpicture}

\draw[black,thick] (3.5,-0.5) node{$\rho_{\ell}$};
\begin{axis}[%
xmin=0.2,
xmax=1,
ymin=0,
ymax=2.5,
axis background/.style={fill=white},
legend style={at={(0.8,0.7777)}, legend cell align=left, align=left, draw=white!15!black}
]
\addplot [color=black, mark=asterisk, mark options={solid, black}]
  table[row sep=crcr]{%
0.2	0.0135917070445484\\
0.21	0.024028085213679\\
0.22	0.0345152264546752\\
0.23	0.0450485830787132\\
0.24	0.0556332278082955\\
0.25	0.0662713770130116\\
0.26	0.0769675601260176\\
0.27	0.0877229761114586\\
0.28	0.0985410808414728\\
0.29	0.10942529978651\\
0.3	0.120378646332692\\
0.31	0.131404457363081\\
0.32	0.142506162629595\\
0.33	0.153687288598641\\
0.34	0.164951567997924\\
0.35	0.176302798059586\\
0.36	0.187744949629327\\
0.37	0.199282152712875\\
0.38	0.21091873790879\\
0.39	0.222659127805758\\
0.4	0.234508028982328\\
0.41	0.246470327398553\\
0.42	0.258551082184467\\
0.43	0.270755711958587\\
0.44	0.283089687043555\\
0.45	0.29555925195165\\
0.46	0.308170391389401\\
0.47	0.320929463994386\\
0.48	0.333843428630412\\
0.49	0.34691948925427\\
0.5	0.36016530246227\\
0.51	0.373588932869159\\
0.52	0.387198923759387\\
0.53	0.40100433376755\\
0.54	0.415014781691614\\
0.55	0.429240497271653\\
0.56	0.443692374951742\\
0.57	0.45838201599863\\
0.58	0.473321874207494\\
0.59	0.488525252368481\\
0.6	0.50400632324824\\
0.61	0.519780500574599\\
0.62	0.535864227778184\\
0.63	0.552275306974154\\
0.64	0.569032933089574\\
0.65	0.586157895803815\\
0.66	0.603672822596016\\
0.67	0.621602317252613\\
0.68	0.639973231045123\\
0.69	0.658814949025943\\
0.7	0.678159722489549\\
0.71	0.698043024694162\\
0.72	0.718504179459194\\
0.73	0.739586522470962\\
0.74	0.761338536961799\\
0.75	0.783814228701487\\
0.76	0.807074233979449\\
0.77	0.831186907090974\\
0.78	0.856229693950995\\
0.79	0.882290845976568\\
0.8	0.90947147053402\\
0.81	0.937888207392332\\
0.82	0.96767664517807\\
0.83	0.998995583926663\\
0.84	1.03203269578781\\
0.85	1.06701198819192\\
0.86	1.10420384618413\\
0.87	1.14393882743593\\
0.88	1.18662682970306\\
0.89	1.23278504095418\\
0.9	1.28307866076342\\
0.91	1.33838359541597\\
0.92	1.39988608224379\\
0.93	1.46924920506839\\
0.94	1.54890853673109\\
0.95	1.64263999550043\\
0.96	1.75676933617745\\
0.97	1.90315490382808\\
0.98	2.10841910197775\\
0.99	2.45751158566533\\
};
\addlegendentry{Lower bound from Theorem \ref{thm:lowerWyner}}

\addplot [color=black, dashdotted]
  table[row sep=crcr]{%
0.2	0.0648219583894236\\
0.21	0.0669643624271532\\
0.22	0.0692162469635869\\
0.23	0.0715862951754564\\
0.24	0.0740737320132272\\
0.25	0.0766808095585318\\
0.26	0.0794076467776663\\
0.27	0.0822578828480038\\
0.28	0.0852331002503255\\
0.29	0.0883351232800282\\
0.3	0.0915664117258026\\
0.31	0.0949293381475136\\
0.32	0.0984264323021513\\
0.33	0.102060392118247\\
0.34	0.105833990083504\\
0.35	0.10975023215263\\
0.36	0.113812267118854\\
0.37	0.118023421003446\\
0.38	0.122387177153972\\
0.39	0.126907308128784\\
0.4	0.131587708903425\\
0.41	0.136432528803581\\
0.42	0.141446207376967\\
0.43	0.146633320314716\\
0.44	0.151998922329149\\
0.45	0.157547862923733\\
0.46	0.163285862154178\\
0.47	0.169218922561345\\
0.48	0.175353152896014\\
0.49	0.181695178129258\\
0.5	0.188251992217438\\
0.51	0.195031069128068\\
0.52	0.202040365440576\\
0.53	0.209288364630245\\
0.54	0.216784121927144\\
0.55	0.224537313065881\\
0.56	0.23255829123785\\
0.57	0.240858168415661\\
0.58	0.249448807616992\\
0.59	0.258342981035065\\
0.6	0.267554522745678\\
0.61	0.277098153403608\\
0.62	0.286989912603284\\
0.63	0.297247080489475\\
0.64	0.307888428562169\\
0.65	0.318934342814625\\
0.66	0.330406952895676\\
0.67	0.34233042138876\\
0.68	0.354731166263004\\
0.69	0.367638146596764\\
0.7	0.381083195956149\\
0.71	0.395101445427218\\
0.72	0.409731610473456\\
0.73	0.425016911632562\\
0.74	0.441005225124573\\
0.75	0.457750246538166\\
0.76	0.475312235780459\\
0.77	0.49375917708773\\
0.78	0.513168152798543\\
0.79	0.533627016407863\\
0.8	0.555236556019843\\
0.81	0.578113113549081\\
0.82	0.602391897033619\\
0.83	0.628231372124975\\
0.84	0.655818882080267\\
0.85	0.685378110813721\\
0.86	0.717179124308464\\
0.87	0.751552115210384\\
0.88	0.788906820616792\\
0.89	0.829759986355305\\
0.9	0.874776546350393\\
0.91	0.924832127355661\\
0.92	0.981112676184223\\
0.93	1.04528094598415\\
0.94	1.11977229414871\\
0.95	1.20836239217333\\
0.96	1.31737660281051\\
0.97	1.45867310761167\\
0.98	1.65887401757016\\
0.99	2.00292871444252\\
};
\addlegendentry{$I(X;Y)$}

\end{axis}
\end{tikzpicture}
\vspace{-1em}
\caption{The $*$-line is the lower bound on $C(X;Y)$ from Theorem \ref{thm:lowerWyner} and the dashed line is the mutual information $I(X;Y)$ for the described Laplace distribution. In this setup we plot the bounds on $C(X;Y)$ in nats versus $\rho_{\ell}$.} \label{fig:Laplace}
\end{figure}

\section{Proof of Theorem \ref{thm:lowerWyner}} \label{sec:proofmainthm}
\subsection{Preliminary}
\begin{theorem}[Theorem~2 in \cite{Hyper_Gauss}] \label{Thm:Hypercontract}
For $K \succeq 0$, $0 < \lambda < 1$, there exists a $0\preceq K^{\prime} \preceq K$ and $(X^{\prime},Y^{\prime})\sim \mathcal{N}(0,K^{\prime})$ such that $(X, Y)$ have distribution $p_{(X,Y)}$ with covariance constraint $K$, the following inequality holds
\begin{align}
\inf_W h(Y|W)+ &h(X|W) - (1+\lambda) h(X, Y|W)  \nonumber \\
&\geq h(Y^{\prime})+ h(X^{\prime}) -(1+\lambda)h(X^{\prime}, Y^{\prime}) . \label{Eq-Thm:Hypercontract}
\end{align}
\end{theorem}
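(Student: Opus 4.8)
The plan is to recognize the left-hand side as a \emph{lower convex envelope} and to prove its Gaussian optimality by a doubling-and-rotation (``factorization of the convex envelope'') argument in the spirit of Geng and Nair. Write $T(p_{XY}) = h(X) + h(Y) - (1+\lambda)\,h(X,Y)$. Since the infimum over $W$ ranges over all splittings $p_{XY} = \int p_{XY\mid w}\,dP_W(w)$ and the objective is $\int T(p_{XY\mid w})\,dP_W(w)$, the left-hand side is exactly the lower convex envelope $s(p_{XY})$ of $T$ evaluated at $p_{XY}$. The right-hand side is $T(p_{X'Y'})$ for a Gaussian $(X',Y')\sim\mathcal{N}(0,K')$. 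Thus the theorem reduces to two claims: (a) among all laws with a fixed covariance $K$, the Gaussian law $p_{G,K}$ minimizes $s$, whence $s(p_{XY})\ge s(p_{G,K})$ for the given $p_{XY}$; and (b) the Gaussian envelope satisfies $s(p_{G,K}) = \min_{0\preceq K'\preceq K} T(p_{G,K'})$, which is precisely the claimed right-hand side.

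For the core inequality I would take two independent copies $(X_1,Y_1,W_1)$ and $(X_2,Y_2,W_2)$ of the optimal system (each $W_i$ optimal for its copy) and apply the orthogonal rotation
\[
U = \tfrac{X_1+X_2}{\sqrt{2}}, \quad V = \tfrac{X_1-X_2}{\sqrt{2}}, \quad P = \tfrac{Y_1+Y_2}{\sqrt{2}}, \quad Q = \tfrac{Y_1-Y_2}{\sqrt{2}}.
\]
Because an orthogonal map preserves the joint differential entropy (conditionally on $W_{12}=(W_1,W_2)$) and the two copies are conditionally independent given $W_{12}$, the doubled objective rewrites as $2\,s(p_{XY}) = h(U,V\mid W_{12}) + h(P,Q\mid W_{12}) - (1+\lambda)\,h(U,V,P,Q\mid W_{12})$. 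Expanding the chain rule, lower bounding $h(U\mid V,W_{12})\ge h(U\mid V,Q,W_{12})$ and $h(P\mid Q,W_{12})\ge h(P\mid V,Q,W_{12})$ (conditioning reduces entropy), and recognizing the two resulting groups as envelope values with auxiliaries $W_{12}$ and $(V,Q,W_{12})$ respectively, yields
\[
2\,s(p_{XY}) \;\ge\; s(p_{UP}) + s(p_{VQ}),
\]
where both $p_{UP}$ and $p_{VQ}$ share the original covariance $K$ (normalized sums and differences of i.i.d. pairs preserve $K$).

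To finish claim (a) I would apply this inequality at a minimizer $q^\star$ of $s$ over the covariance-$K$ laws: the inequality forces $s(p_{UP})=s(p_{VQ})=s(q^\star)$ and makes the whole chain tight, and the equality conditions (equalities in the conditioning steps, i.e. conditional independence of rotated coordinates) pin $q^\star$ down via a Darmois--Skitovich-type rigidity---independence of the sum and difference of i.i.d.\ vectors forces them Gaussian---so $q^\star = p_{G,K}$; alternatively one iterates the sum map and uses the central-limit convergence to $p_{G,K}$ together with lower semicontinuity of $s$. Either way $s(p_{XY})\ge s(p_{G,K})$. Claim (b) is then the finite-dimensional computation: for Gaussian input the optimal $W$ is jointly Gaussian, so $(X,Y)\mid W$ is Gaussian with a $w$-independent conditional covariance $K'=\mathbb{E}[K_{XY\mid W}]$, and the law of total covariance gives $K'\preceq K$; optimizing $T(p_{G,K'})$ over $0\preceq K'\preceq K$ produces the stated right-hand side.

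I expect the main obstacle to be the rigidity/limiting step: converting the attained equalities into a clean Gaussianity conclusion requires the characterization theorem for independent linear combinations of independent variables, and in this continuous, covariance-constrained setting one must also secure existence of the optimizer $W^\star$ (and of the minimizer $q^\star$), lower semicontinuity of $s$ under weak convergence, and integrability of the differential entropies so that the chain-rule manipulations are legitimate. The rotation inequality itself is essentially bookkeeping with entropy identities, but the equality analysis and the attendant measure-theoretic regularity are where the real work lies.
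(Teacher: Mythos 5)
Your proposal is correct and follows the same route the paper itself relies on: the paper's entire proof of this theorem is a one-line citation to Theorem~2 of \cite{Hyper_Gauss} (with the substitution $p=\tfrac{1}{\lambda}+1$), and your doubling-and-rotation sketch is precisely that factorization-of-convex-envelopes argument of Geng and Nair, organized the same way (envelope identification, subadditivity $2s(p_{XY})\ge s(p_{UP})+s(p_{VQ})$ under the orthogonal rotation, rigidity/CLT step, Gaussian evaluation). Notably, the main obstacle you flag---existence of the optimizing $W$ and the attendant regularity---is exactly the gap the paper acknowledges is missing in \cite{Hyper_Gauss} and defers to \cite{SulaG:19it}.
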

\begin{proof}
The theorem is a consequence of \cite[Theorem~2]{Hyper_Gauss}, for a specific choice of $p=\frac{1}{\lambda}+1$. The proof regarding the existence of the infimum that is missing in \cite{Hyper_Gauss} is given in \cite{SulaG:19it}. 
\end{proof}

Before we jump into details it is important to realise that $\inf_W h(Y|W)+ h(X|W) - (1+\lambda) h(X, Y|W)$ is indeed the lower convex envelope of $h(Y)+ h(X) - (1+\lambda) h(X, Y)$ by thinking of $W$ as a time sharing random variable. In other words, we are taking the infimum over all convex envelopes such that for a covariance constraint on the pair $(X,Y)$ it satisfies the following 
\begin{align}
&\inf_{(X,Y)} \inf_W h(Y|W)+ h(X|W) - (1+\lambda) h(X, Y|W) \nonumber \\
&\quad \quad = \inf_{(X,Y)} h(Y)+ h(X) - (1+\lambda) h(X, Y).
\end{align}

The next lemma that is an optimization problem on the covariance matrix constraint for Gaussian random variables is needed for the proof of the theorems.
%

\begin{lemma} \label{lem:lemmacontractivity}
For $(X^{\prime},Y^{\prime})\sim \mathcal{N}(0,K^{\prime})$, the following inequality holds
\begin{align}
&\min_{K^{\prime}: 0 \preceq K^{\prime} \preceq \begin{pmatrix} 1 & \rho \\ \rho &1  \end{pmatrix}}  h(X^{\prime})+h(Y^{\prime})-(1+\lambda)h(X^{\prime},Y^{\prime}) \nonumber \\
& \quad \quad \geq  \frac{1}{2} \log{\frac{1}{1-\lambda^2}}-\frac{\lambda}{2} \log{(2\pi e)^2\frac{(1-\rho)^2(1+\lambda)}{1-\lambda}},
\end{align}
where $\lambda \leq \rho$.
\end{lemma}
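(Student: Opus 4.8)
The plan is to turn this into a finite-dimensional optimization over the entries of $K'$ and solve it by eliminating the off-diagonal entry and exploiting the symmetry of the problem. Write $K'=\begin{pmatrix} a & c\\ c& b\end{pmatrix}$. Since $(X',Y')\sim\mathcal{N}(0,K')$, the differential entropies are $h(X')=\frac12\log(2\pi e\,a)$, $h(Y')=\frac12\log(2\pi e\,b)$ and $h(X',Y')=\frac12\log((2\pi e)^2(ab-c^2))$, so the objective becomes
\begin{align}
f(a,b,c)= \tfrac12\log(ab)-\tfrac{1+\lambda}{2}\log(ab-c^2)-\lambda\log(2\pi e).
\end{align}
The two Löwner constraints $0\preceq K'\preceq\begin{pmatrix}1&\rho\\\rho&1\end{pmatrix}$ translate into the scalar conditions $0\le a,b\le 1$, $ab-c^2\ge 0$ and $(1-a)(1-b)-(\rho-c)^2\ge 0$. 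I would first record the convexity structure: $-\tfrac{1+\lambda}{2}\log\det K'$ is convex in $K'$ while $\tfrac12\log(ab)$ is concave, so $f$ is neither convex nor concave, and a pure convex-analysis argument will not suffice.

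First I would eliminate $c$. Since $\partial f/\partial(c^2)=\tfrac{1+\lambda}{2(ab-c^2)}>0$, for fixed $(a,b)$ the objective is increasing in $c^2$, so the minimizer chooses $c$ as close to $0$ as the constraint $(\rho-c)^2\le(1-a)(1-b)$ permits. This produces two regimes: regime (i), where $\rho^2\le(1-a)(1-b)$ so that $c=0$ is feasible, and regime (ii), where $\rho^2>(1-a)(1-b)$ so that the constraint $K\succeq K'$ is active and $c^\star=\rho-\sqrt{(1-a)(1-b)}$. I expect the optimum to lie in regime (ii) with the upper constraint active and of rank one.

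Next I would reduce to the symmetric case $a=b=:t$ and solve. Writing the stationarity (KKT) conditions for the two diagonal variables and subtracting them factors as $(b-a)\big[\tfrac{1}{2ab}-\tfrac{1+\lambda}{2\det K'}-\mu\big]=0$, which I would use to argue that the relevant stationary point is symmetric. With $a=b=t$ and the active branch $\rho-c=1-t$ one has $t^2-c^2=(1-\rho)(2t-(1-\rho))$, and a one-dimensional minimization of
\begin{align}
f(t)= \log t-\tfrac{1+\lambda}{2}\log\!\big[(1-\rho)(2t-(1-\rho))\big]-\lambda\log(2\pi e)
\end{align}
gives $t^\star=\tfrac{1-\rho}{1-\lambda}$, correlation coefficient $c^\star/t^\star=\lambda$, and $\det K'=\big(t^\star\big)^2(1-\lambda^2)>0$, so the lower constraint is slack. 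Feasibility $t^\star\le 1$ is exactly the hypothesis $\lambda\le\rho$. Substituting back reproduces $-\tfrac{1-\lambda}{2}\log(1-\lambda)-\tfrac{1+\lambda}{2}\log(1+\lambda)-\lambda\log(2\pi e)-\lambda\log(1-\rho)$, which one checks equals $\tfrac12\log\tfrac{1}{1-\lambda^2}-\tfrac{\lambda}{2}\log\big((2\pi e)^2\tfrac{(1-\rho)^2(1+\lambda)}{1-\lambda}\big)$, the claimed right-hand side. To prove the lower bound I then verify $f\ge\mathrm{RHS}$ on both regimes: in regime (i) the minimum is the strictly larger value $-\lambda\log(2\pi e)-\lambda\log(1-\rho)$ (the gap being the nonpositive quantity $-\tfrac{1-\lambda}{2}\log(1-\lambda)-\tfrac{1+\lambda}{2}\log(1+\lambda)$), and the competing branch $\rho-c=-(1-t)$ forces a larger $c^2$ and hence a larger $f$.

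The main obstacle is global optimality. Because $f$ splits into a concave and a convex piece, I cannot simply invoke convexity to promote the stationary point to a global minimum; instead I must control the boundary explicitly. The two delicate sub-points are (a) justifying the reduction to $a=b$, either through the factorization of the subtracted KKT conditions above or by checking directly that the symmetric objective obtained after eliminating $c$ is minimized on the diagonal, and (b) ruling out the competing regime and branch, which I handle by the monotonicity-in-$c^2$ comparison together with the sign analysis of $g(\lambda)=-\tfrac{1-\lambda}{2}\log(1-\lambda)-\tfrac{1+\lambda}{2}\log(1+\lambda)$, whose derivative $\tfrac12\log\tfrac{1-\lambda}{1+\lambda}<0$ yields $g(\lambda)<0$ for $0<\lambda<1$. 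The matrix-inequality constraints themselves are otherwise routine once rewritten as the scalar determinant conditions above.
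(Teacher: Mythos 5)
Your proposal is correct and it reaches the same optimizer as the paper (equal variances $\tfrac{1-\rho}{1-\lambda}$, correlation $\lambda$, with the constraint $K'\preceq K$ active and rank-deficient there), but by a genuinely different reduction. The paper works in the parametrization $(\sigma_X,\sigma_Y,q)$ and removes the asymmetry $\sigma_X\neq\sigma_Y$ by a set relaxation: using $\sigma_X^2+\sigma_Y^2\ge 2\sigma_X\sigma_Y$ it replaces the feasible set $\mathcal{A}_\rho$ by a larger set $\mathcal{B}_\rho$ that depends only on the product $\sigma^2=\sigma_X\sigma_Y$ (enlarging the set preserves the lower bound), then splits into the cases $\rho\ge q$ and $\rho<q$ and finishes with weak duality plus a KKT certificate of strong duality. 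You instead eliminate the off-diagonal entry first (monotonicity of the objective in $c^2$), split on whether the upper L\"owner constraint is active, and recover the symmetry $a=b$ from the factorized difference of stationarity conditions. Your symmetry step does close, and more decisively than you indicate: since $\det K'\le ab$ and $\mu\ge 0$, the bracket satisfies $\tfrac{1}{2ab}-\tfrac{1+\lambda}{2\det K'}-\mu\le-\tfrac{\lambda}{2ab}-\mu<0$, so every KKT point is symmetric, with no case analysis needed. The trade-off is this: the paper's relaxation buys immunity from exactly the global-optimality worry you flag, because symmetry is automatic when the relaxed set only sees $\sigma_X\sigma_Y$; your route still owes one check on the faces $a=1$ or $b=1$, where additional multipliers enter and the subtraction argument does not literally apply. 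That check is easy — on such a face the constraint forces $c=\rho$, the objective is decreasing in the remaining diagonal entry, and the face minimum $K'=K$ is the point $t=1\ge t^\star$ of your one-dimensional function, hence obeys the bound; this is the same comparison $f(\lambda,\tfrac{1-\rho}{1-\lambda},\lambda)\le f(\lambda,1,\rho)$ with which the paper ends its own outline. In exchange, your argument computes the true minimum of the original problem rather than of a relaxation, your regime-(i) versus regime-(ii) comparison via $g(\lambda)<0$ is correct and makes explicit why the active-constraint branch wins, and your feasibility condition $t^\star\le 1$ recovers precisely the hypothesis $\lambda\le\rho$.
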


\begin{proof}
The proof outline is given in Appendix \ref{App:lowerboundWCI}. For the full proof, refer to \cite{SulaG:19it}.
\end{proof}

\subsection{Lower bound on (relaxed) Wyner's common information}

Here, we consider a slightly more general case that is, we give a lower bound on relaxed Wyner's common information in Theorem \ref{thm:lowerWynerelaxed}. Thus, as a special case we obtain Theorem \ref{thm:lowerWyner}.
Let us define the relaxed Wyner's common information as in \cite{GastparS:19itw,SulaG:19it}.
For jointly continuous random variables $X$ and $Y$ with joint distribution $p(x,y),$ we define
\begin{align}
C_{\gamma} (X; Y) &=  \inf_{W:I(X;Y|W) \le \gamma} I(X,Y ; W), \label{Eq-def-Wyner-relaxed}
\end{align}
where the constraint of conditional independence is relaxed into an upper bound on the conditional mutual information. For $\gamma=0,$ we have $C_0(X;Y) = C(X;Y),$ the standard Wyner's common information.
A lower bound on relaxed Wyner's common information is given in the following theorem.

\begin{theorem} \label{thm:lowerWynerelaxed}
Let $(X,Y)$ have probability density functions $p_{(X,Y)}$ that satisfy the covariance constraint $K_{(X,Y)}$. Let, $(X_g,Y_g) \sim \mathcal{N}(0,K_{(X,Y)})$, then
\begin{align}
C_{\gamma}(X;Y)\geq  \max \{ C_{\gamma}(X_g;Y_g) +h(X,Y)-h(X_g,Y_g),0 \},
\end{align}
where
\begin{align}
C_{\gamma}(X_g;Y_g) &= \frac{1}{2} \log^+ \left( \frac{1 + |\rho|}{1-|\rho|} \cdot \frac{1 - \sqrt{1-e^{-2\gamma}}}{1 + \sqrt{1-e^{-2\gamma}}} \right),
\end{align}
and $\rho$ is the correlation coefficient between $X$ and $Y$.

\end{theorem}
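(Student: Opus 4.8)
The plan is to bound $C_\gamma(X;Y)$ by Lagrangian (weak) duality and then reduce the resulting \emph{unconstrained} problem to the Gaussian functional handled by Theorem~\ref{Thm:Hypercontract} and Lemma~\ref{lem:lemmacontractivity}. Since $C_\gamma$, $I(X;Y|W)$ and $I(X,Y;W)$ are invariant under separate invertible scalings of $X$ and $Y$, and since $h(X,Y)-h(X_g,Y_g)$ and $C_\gamma(X_g;Y_g)$ are likewise scale-invariant, I would first normalize to unit variances, so that the covariance constraint becomes the correlation matrix $\left(\begin{smallmatrix}1 & \rho\\ \rho & 1\end{smallmatrix}\right)$ demanded by Lemma~\ref{lem:lemmacontractivity}; replacing $Y$ by $-Y$ if necessary, I also assume $\rho\ge 0$ (so $\rho=|\rho|$). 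For every multiplier $s\ge 0$ and every feasible $W$ (i.e. $I(X;Y|W)\le\gamma$) one has $I(X,Y;W)\ge I(X,Y;W)+s\big(I(X;Y|W)-\gamma\big)$, so dropping feasibility from the infimum gives $C_\gamma(X;Y)\ge \inf_W\big[I(X,Y;W)+s\,I(X;Y|W)\big]-s\gamma$ for every $s\ge 0$.

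Next I would set $s=1/\lambda$ with $0<\lambda<1$ and expand the mutual informations into differential entropies. A direct computation gives
\begin{align}
I(X,Y;W)+\tfrac{1}{\lambda}I(X;Y|W) = h(X,Y)+\tfrac{1}{\lambda}\big[h(X|W)+h(Y|W)-(1+\lambda)h(X,Y|W)\big],
\end{align}
whose $W$-dependent part is exactly the functional on the left of (\ref{Eq-Thm:Hypercontract}). Applying Theorem~\ref{Thm:Hypercontract} and then Lemma~\ref{lem:lemmacontractivity} (valid for $\lambda\le\rho$) lower-bounds this infimum in closed form; absorbing the $-\gamma/\lambda$ term via $2\gamma=-\log(1-\beta^2)$ with $\beta:=\sqrt{1-e^{-2\gamma}}$, I expect to obtain, for each $0<\lambda\le\rho$,
\begin{align}
C_\gamma(X;Y)\ge h(X,Y)+\frac{1}{2\lambda}\log\frac{1-\beta^2}{1-\lambda^2}-\frac{1}{2}\log\!\big((2\pi e)^2(1-\rho)^2\big)-\frac{1}{2}\log\frac{1+\lambda}{1-\lambda} =: h(X,Y)+F(\lambda).
\end{align}

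It then remains to optimize $F$ over $\lambda$. Differentiating, the scale terms cancel and $F'(\lambda)$ is proportional to $\log\frac{1-\lambda^2}{1-\beta^2}$, which is positive for $\lambda<\beta$ and negative for $\lambda>\beta$; hence the unconstrained maximizer is $\lambda^\star=\beta$. When $\beta\le\rho$ this maximizer is feasible, and substituting $\lambda=\beta$ (where the first term vanishes) together with the elementary identity $\tfrac12\log\tfrac{1+\rho}{1-\rho}-\tfrac12\log(1-\rho^2)=-\log(1-\rho)$ yields exactly $\max_\lambda F(\lambda)=C_\gamma(X_g;Y_g)-h(X_g,Y_g)$, which reproduces the claimed bound. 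Finally, because $C_\gamma(X;Y)=\inf_W I(X,Y;W)\ge 0$, the bound may be replaced by its positive part, producing the $\max\{\cdot,0\}$; and the $\log^+$ in $C_\gamma(X_g;Y_g)$ is consistent with the feasibility threshold, since $\beta>\rho$ is precisely the regime in which $C_\gamma(X_g;Y_g)=0$ and the statement holds trivially by nonnegativity.

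The main obstacle I anticipate is this $\lambda$-optimization together with its feasibility bookkeeping: one must check that the interior optimizer $\lambda^\star=\beta$ obeys the constraint $\lambda\le\rho$ of Lemma~\ref{lem:lemmacontractivity} exactly in the nontrivial regime $\gamma<I(X_g;Y_g)$ (equivalently $\beta<\rho$), and that in the complementary regime the $\log^+$ and the $\max\{\cdot,0\}$ clipping combine to give the trivial bound with no residual gap. Secondary technical points are the harmlessness of the scale-normalization (every quantity in the statement is invariant under separate invertible scalings of $X$ and $Y$) and the measure-theoretic existence of the infimum in Theorem~\ref{Thm:Hypercontract}, which is supplied by the cited companion reference; note that no \emph{tightness} of weak duality is needed, since the derived lower bound already matches the claimed right-hand side identically.
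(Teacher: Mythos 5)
Your proposal is correct and takes essentially the same route as the paper's proof: weak Lagrangian duality on the constraint $I(X;Y|W)\le\gamma$, rewriting the Lagrangian in terms of the convex-envelope functional of Theorem~\ref{Thm:Hypercontract}, invoking Lemma~\ref{lem:lemmacontractivity}, and then maximizing over the multiplier, whose optimizer $\lambda^\star=\sqrt{1-e^{-2\gamma}}$ is exactly the paper's $\mu_*=1/\sqrt{1-e^{-2\gamma}}$ under the reparametrization $\mu=1/\lambda$. If anything, your bookkeeping of the feasibility threshold ($\beta\le\rho$, matching the $\log^+$ and the $\max\{\cdot,0\}$ clipping) is slightly more explicit than the paper's.
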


\begin{proof}
Note that the mean of the random variables does not affect the Wyner's common information and its relaxed variant thus, we assume mean zero for both $X$ and $Y$. Also, the relaxed Wyner's common information is invariant to scaling of $X$ and $Y$.
Thus, without loss of generality we assume $X$ and $Y$ to be mean zero, unity variance and correlation coefficient $\rho$, so we proceed as follows
\begin{align}
&C_{\gamma}(X;Y) \nonumber \\
&=\inf_{W:I(X;Y|W) \leq \gamma} I(X,Y;W) \label{eqn:defWynerCI} \\
& \geq \inf_{W} (1+\mu)I(X,Y;W)-\mu I(X;W) -\mu I(Y;W)   \nonumber \\
& \quad \quad  +\mu I(X;Y) - \mu \gamma \label{eqn:alllambda} \\
&=  \mu \inf_{W}h(X|W)+h(Y|W) -(1+\frac{1}{\mu})h(X,Y|W)  \nonumber \\
& \quad \quad +h(X,Y) -\mu \gamma \label{eqn:rewritealllambda} \\
&\geq  \mu \hspace{-2em} \min_{K^{\prime}: 0 \preceq K^{\prime} \preceq \begin{pmatrix} 1 & \rho \\ \rho &1  \end{pmatrix}} h(X^{\prime})+h(Y^{\prime})-(1+\frac{1}{\mu})h(X^{\prime},Y^{\prime}) \nonumber \\
& \quad \quad +h(X,Y) -\mu \gamma \label{eqn:thm2sim} \\
& \geq h(X,Y) + \frac{\mu}{2} \log{\frac{\mu^2}{\mu^2-1}} \nonumber \\
& \quad \quad -\frac{1}{2} \log{(2\pi e)^2\frac{(1-\rho)^2(\mu+1)}{\mu-1}} -\mu \gamma  \label{eqn:lastexp} \\
& \geq h(X,Y) -h(X_g,Y_g) +C(X_g;Y_g) \label{eqn:lastexp2}
\end{align}
where (\ref{eqn:alllambda}) follows from weak duality and the bound is valid for all $\mu \geq 0$; (\ref{eqn:rewritealllambda}) follows from simplification; (\ref{eqn:thm2sim}) follows from Theorem \ref{Thm:Hypercontract} under the assumption that $\mu > 1$ where $(X^{\prime},Y^{\prime}) \sim \mathcal{N}(0,K^{\prime})$; 
(\ref{eqn:lastexp}) follows from Lemma \ref{lem:lemmacontractivity} under the assumption $\mu \geq \frac{1}{\rho}$ and (\ref{eqn:lastexp2}) follows by maximizing the function 
\begin{align}
g(\mu)&=h(X,Y) -\mu \gamma + \frac{\mu}{2} \log{\frac{\mu^2}{\mu^2-1}} \nonumber \\
& \quad \quad -\frac{1}{2} \log{(2\pi e)^2\frac{(1-\rho)^2(\mu+1)}{\mu-1}},
\end{align} 
for $\mu \geq \frac{1}{\rho}$. Now we need to solve $\max_{\mu \geq \frac{1}{\rho}}  g(\mu).$
The function $g$ is concave in $\mu$,
\begin{align}
\frac{\partial^2 g}{\partial \mu^2}&=-\frac{1}{\mu(\mu^2-1)} < 0,
\end{align}
and by studying the monotonicity we obtain
\begin{align}
\frac{\partial g}{\partial \mu}&=-\frac{1}{2}\log{\frac{\mu^2-1}{\mu^2}}  -\gamma.
\end{align}
Since the function is concave, the maximum is attained when the first derivative vanishes. That leads to the optimal solution $\mu_*=\frac{1}{\sqrt{1-e^{-2\gamma}}}$, where $\mu_*$ has to satisfy $\mu_* \geq \frac{1}{\rho}$. Substituting for the optimal solution we get 
\begin{align}
C_{\gamma}(X;Y) &\geq g \left( \frac{1}{\sqrt{1-e^{-2\gamma}}} \right) \\
&=h(X,Y) -h(X_g,Y_g)+C_{\gamma}(X_g;Y_g).
\end{align}
\end{proof}

\section{Vector Wyner's common information}

It is well-known that for $n$ independent pairs of random variables, we have
\begin{align}
C (X^n; Y^n) =  \sum_{i=1}^n C(X_i; Y_i).  \label{Eqn-thm:gensplit}
\end{align}
For the proof see \cite[Lemma~2]{SulaG:19it} by letting $\gamma=0$.

By making use of Theorem \ref{thm:lowerWyner} and (\ref{Eqn-thm:gensplit}) we can lower bound the Wyner's common information for $n$ independent pairs of random variables as
\begin{align}
C (X^n; Y^n) \geq  \sum_{i=1}^n  C(X_{g_i};Y_{g_i}) +h(X_i,Y_i)-h(X_{g_i},Y_{g_i}).
\end{align}

An interesting problem is finding a bound for arbitrary $(X^n,Y^n)$, for any dependencies between $X^n$ and $Y^n$. 
This is not studied here and is left for future investigation.

\appendices

\section{Proof Outline of Lemma \ref{lem:lemmacontractivity}} \label{App:lowerboundWCI}

Let us parametrize $K^{\prime}$ as $K^{\prime} = \begin{pmatrix} \sigma^2_X & q\sigma_X \sigma_Y \\ q\sigma_X \sigma_Y & \sigma^2_Y  \end{pmatrix} \succeq 0$. By substituting we obtain

\begin{align}
&\min_{K^{\prime}: 0 \preceq K^{\prime} \preceq \begin{pmatrix} 1 & \rho \\ \rho &1  \end{pmatrix}} h(X^{\prime})+h(Y^{\prime})-(1+\lambda)h(X^{\prime},Y^{\prime}) \nonumber \\
& \quad \quad \quad = \min_{(\sigma_X,\sigma_Y,q) \in \mathcal{A}_{\rho}} \frac{1}{2}\log{(2 \pi e)^2 \sigma_X^2\sigma_Y^2} \\
& \quad \quad \quad   -\frac{1+\lambda}{2}\log{(2 \pi e)^2 \sigma_X^2\sigma_Y^2(1-q^2)} \label{eqn:mlproof1}
\end{align}
where the set $\mathcal{A}_{\rho}$ is 
\begin{align} 
\mathcal{A}_{\rho}=\left\{( \sigma_X,\sigma_Y,q): \begin{pmatrix} \sigma^2_X -1& q\sigma_X \sigma_Y -\rho\\ q\sigma_X \sigma_Y-\rho & \sigma^2_Y -1 \end{pmatrix} \preceq 0 \right\}.
\end{align}

Another way of rewriting $\mathcal{A}_{\rho}$ is 

\begin{align} 
\mathcal{A}_{\rho}=\left\{(\sigma_X,\sigma_Y,q): \hspace{-1.5em} \substack{\sigma^2_X+\sigma^2_Y \leq 2, \\ \quad (1-q^2)\sigma^2_X\sigma^2_Y +2\rho q \sigma_X\sigma_Y +1-\rho^2-(\sigma^2_X+\sigma^2_Y) \geq 0} \right\}.
\end{align}

Let us define
\begin{align} 
\mathcal{B}_{\rho}=\left\{ (\sigma_X,\sigma_Y,q): \hspace{-1.5em} \substack{\sigma_X\sigma_Y \leq 1, \\ \quad (1-q^2)\sigma^2_X\sigma^2_Y +2\rho q \sigma_X\sigma_Y +1-\rho^2-2\sigma_X\sigma_Y) \geq 0} \right\},
\end{align} 
and the inequality $\sigma^2_X+\sigma^2_Y \geq 2\sigma_X\sigma_Y$, implies that $\mathcal{A}_{\rho} \subseteq \mathcal{B}_{\rho}$.
By reparametrizing $\sigma^2=\sigma_X\sigma_Y$, the set $\mathcal{B}_{\rho}$ becomes  
\begin{align} 
\mathcal{D}_{\rho}=\left\{( \sigma^2,q): \substack{\sigma^2 \leq 1, \\ (\sigma^2(1-q)-1+\rho)(\sigma^2(1+q)-1-\rho) \geq 0} \right\}.
\end{align}

The set $\mathcal{D}_{\rho}$ is rewritten as
\begin{align} 
\mathcal{D}_{\rho}=\left\{( \sigma^2,q): \substack{\text{for } \rho \geq q, \quad \sigma^2(1-q) \leq 1-\rho \\ \text{for } \rho < q, \quad  \sigma^2(1+q) \leq 1+\rho } \label{eqn:D} \right\}.
\end{align}

Thus, we have
\begin{align}
&\min_{(\sigma_X,\sigma_Y,q) \in \mathcal{A}_{\rho}} \frac{1}{2}\log{(2 \pi e)^2 \sigma_X^2\sigma_Y^2} \\
& -\frac{1+\lambda}{2}\log{(2 \pi e)^2 \sigma_X^2\sigma_Y^2(1-q^2)} \geq \min_{(\sigma^2,q) \in \mathcal{D}_{\rho}} f(\lambda,\sigma^2,q)
\label{eqn:mlproof2}
\end{align}
where,
\begin{align}f(\lambda,\sigma^2,q)&=\frac{1}{2}\log{(2 \pi e)^2 \sigma^4}-\frac{1+\lambda}{2}\log{(2 \pi e)^2 \sigma^4(1-q^2)} \label{eqn:f}.
\end{align}

\begin{itemize}

\item Let us consider the case $\rho \geq q$ for $\rho$ is positive. Then, by weak duality we have 
\begin{align}
&\min\limits_{(\sigma^2,q) \in \mathcal{D}_{\rho}} f(\lambda,\sigma^2,q) \nonumber \\
& \quad \quad \geq \min\limits_{\sigma^2,q} f(\lambda,\sigma^2,q) + \mu(\sigma^2(1-q)-1+\rho), \label{eqn:mlproof3}
\end{align}
for any $\mu \geq 0$.
By applying Karush-Kuhn-Tucker (KKT) conditions on (\ref{eqn:mlproof3}) we get
\begin{align}
\frac{\partial }{\partial \sigma^2}=-\frac{\lambda}{\sigma^2} + \mu (1-q)&=0, \label{eqn:KKT1} \\ 
\frac{\partial }{\partial q}=\frac{(1+\lambda)q}{1-q^2} - \mu \sigma^2&=0, \label{eqn:KKT2} \\
\mu(\sigma^2(1-q)-1+\rho))&=0. \label{eqn:KKT3}
\end{align}
The optimal solutions to satisfy the KKT conditions are
\begin{align}
q_*=\lambda, \quad \mu_*=\frac{\lambda}{1-\rho}, \quad \sigma^2_*=\frac{1-\rho}{1-\lambda}.
\end{align}
Since the KKT conditions are satisfied by $q_*,\sigma^2_*$ and $\mu_*$ then strong duality holds, thus 
\begin{align}
&\min\limits_{(\sigma^2,q) \in \mathcal{D}_{\rho}} f(\lambda,\sigma^2,q) \label{eqn:mlproof3star}\\
&=\max_{\mu} \min\limits_{\sigma^2,q} f(\lambda,\sigma^2,q) + \mu(\sigma^2(1-q)-1+\rho)) \\
&= f(\lambda,\frac{1-\rho}{1-\lambda},\lambda) \\
&= \frac{1}{2} \log{\frac{1}{1-\lambda^2}}-\frac{\lambda}{2} \log{(2\pi e)^2\frac{(1-\rho)^2(1+\lambda)}{1-\lambda}}. \label{eqn:mlprooffinal}
\end{align}
By combining (\ref{eqn:mlproof1}), (\ref{eqn:mlproof2}), (\ref{eqn:mlproof3star}) and (\ref{eqn:mlprooffinal}) we get the desired lower bound. 

\item For the case $\rho < q$ we omit the details due to lack of space. The optimal solutions are 
\begin{align}
q_*=\rho, \quad \sigma^2_*=\frac{1+\rho}{1+q}.
\end{align}

To conclude we show that $f(\lambda,\frac{1-\rho}{1-\lambda},\lambda) \leq f(\lambda,1,\rho)$ for $\lambda \leq \rho$. The argument goes through also for the case when $\rho$ is negative, which completes the proof.
\end{itemize}

\section*{Acknowledgment}
This work was supported in part by the Swiss National Science Foundation under Grant 169294. 


\bibliographystyle{IEEEtran}
\bibliography{IEEEabrv,wyner_lower}

\begin{thebibliography}{10}
\providecommand{\url}[1]{#1}
\csname url@samestyle\endcsname
\providecommand{\newblock}{\relax}
\providecommand{\bibinfo}[2]{#2}
\providecommand{\BIBentrySTDinterwordspacing}{\spaceskip=0pt\relax}
\providecommand{\BIBentryALTinterwordstretchfactor}{4}
\providecommand{\BIBentryALTinterwordspacing}{\spaceskip=\fontdimen2\font plus
\BIBentryALTinterwordstretchfactor\fontdimen3\font minus
  \fontdimen4\font\relax}
\providecommand{\BIBforeignlanguage}[2]{{%
\expandafter\ifx\csname l@#1\endcsname\relax
\typeout{** WARNING: IEEEtran.bst: No hyphenation pattern has been}%
\typeout{** loaded for the language `#1'. Using the pattern for}%
\typeout{** the default language instead.}%
\else
\language=\csname l@#1\endcsname
\fi
#2}}
\providecommand{\BIBdecl}{\relax}
\BIBdecl

\bibitem{Wyner}
A.~Wyner, ``The common information of two dependent random variables,''
  \emph{IEEE Transactions on Information Theory}, vol.~21, no.~2, pp. 163--179,
  March 1975.

\bibitem{Xu--Liu--Chen}
G.~Xu, W.~Liu, and B.~Chen, ``Wyner's common information for continuous random
  variables - a lossy source coding interpretation,'' in \emph{Annual
  Conference on Information Sciences and Systems}, Baltimore, MD, USA, March
  2011.

\bibitem{Xu--Liu--Chen-2}
------, ``A lossy source coding interpretation of {W}yner's common
  information,'' \emph{IEEE Transactions on Information Theory}, vol.~62,
  no.~2, pp. 754--768, 2016.

\bibitem{Yang-Chen14}
P.~Yang and B.~Chen, ``Wyner's common information in {G}aussian channels,'' in
  \emph{IEEE International Symposium on Information Theory}, Honolulu, HI, USA,
  2014, pp. 3112--3116.

\bibitem{Veld--Gastpar}
G.~O. Veld and M.~Gastpar, ``Total correlation of {G}aussian vector sources on
  the {G}ray-{W}yner network,'' in \emph{Annual Allerton Conference on
  Communication, Control, and Computing (Allerton)}, Monticello, IL, USA,
  September 2016.

\bibitem{Lapidoth--Wigger}
A.~Lapidoth and M.~Wigger, ``Conditional and relevant common information,'' in
  \emph{IEEE International Conference on the Science of Electrical Engineering
  (ICSEE)}, Eilat, Israel, November 2016.

\bibitem{Wang--Lim--Gastpar}
C.-Y. Wang, S.~H. Lim, and M.~Gastpar, ``Information-theoretic caching:
  Sequential coding for computing,'' \emph{IEEE Transactions on Information
  Theory}, vol.~62, no.~11, pp. 6393 -- 6406, August 2016.

\bibitem{Satpathy--Cuff}
S.~Satpathy and P.~Cuff, ``Gaussian secure source coding and {W}yner's common
  information,'' in \emph{IEEE International Symposium on Information Theory
  (ISIT)}, Hong Kong, China, October 2015.

\bibitem{SulaG:21entropy}
E.~Sula and M.~Gastpar, ``Common information components analysis,''
  \emph{Entropy Special Issue on The Role of Signal Processing and Information
  Theory in Modern Machine Learning}, vol.~23, no.~2, 2021.

\bibitem{Geng--Nair}
Y.~Geng and C.~Nair, ``The capacity region of the two-receiver {G}aussian
  vector broadcast channel with private and common messages,'' \emph{IWCIT},
  vol.~60, no.~4, April 2014.

\bibitem{Kotz01}
S.~Kotz, T.~J. Kozubowski, and K.~Podg{\'o}rski, \emph{The Laplace Distribution
  and Generalizations}.\hskip 1em plus 0.5em minus 0.4em\relax Birkh{\"a}user,
  Boston, MA, 2001.

\bibitem{Hyper_Gauss}
C.~Nair, ``An extremal inequality related to hypercontractivity of {G}aussian
  random variables,'' in \emph{Proceedings of the Information Theory and
  Applications Workshop (ITA)}, San Diego, CA, USA, February 2014, pp. 1--7.

\bibitem{SulaG:19it}
\BIBentryALTinterwordspacing
E.~Sula and M.~Gastpar, ``On {W}yner's common information in the {G}aussian
  case,'' \emph{CoRR}, vol. abs/1912.07083, 2019. [Online]. Available:
  \url{http://arxiv.org/abs/1912.07083}
\BIBentrySTDinterwordspacing

\bibitem{GastparS:19itw}
M.~Gastpar and E.~Sula, ``Relaxed {W}yner's common information,'' in
  \emph{Proceedings of the 2019 IEEE Information Theory Workshop}, Visby,
  Sweden, August 2019.

\end{thebibliography}

\end{document}